\newtheorem{theorem}{Theorem}[section]
\theoremstyle{definition}
\theoremstyle{remark}
\numberwithin{equation}{section}
\title{Tetrahedal Geometry from Areas}
\author[L. Crane]{Louis Crane}
\address{Department of Mathematics, Cardwell Hall, Kansas State University, Manhattan, KS  66503}
\email{crane@math.ksu.edu}
\author[D.N.\ Yetter]{David N.\ Yetter}
\address{Department of Mathematics, Cardwell Hall, Kansas State University, Manhattan, KS  66503}
\email[Corresponding Author]{dyetter@math.ksu.edu}
\begin{document}
\maketitle 

\begin{abstract}
We solve a very classical problem:  providing a description of the geometry of a Euclidean tetrahedron from the initial data of the areas of the faces and the areas of the medial parallelograms of Yetter or equivalently of the pseudofaces of McConnell.  In particular, we derive expressions for the dihedral angles, face angles and (an) edge length, the remaining parts being derivable by symmetry or by identities in the classic 1902 compendium of results on tetrahedral geometry by G. Richardson.  We also provide an alternative proof using (bi)vectors of the result of Yetter that four times the sum of the squared areas of the medial parallelograms is equal to the sum of the squared areas of the faces.  Despite the classical nature of the problem, it would not have been natural to consider had it not been suggested by recent work in quantum physics.

\end{abstract}

When the first author was a graduate student, he spent a long time pondering the riddle of quantizing gravity.

Our understanding of gravity from general relativity is that it is the result of curvature in the geometry of spacetime. For this reason, the first author began contemplating the problem of the quantization of geometry.
There is one part of quantum mechanics which is a quantization of a type of geometry, and whose mathematical foundation is unusually well understood: the theory of spin.

Spin is the quantization of angular momentum. Classically, the  angular momentum of  particle is given by $\vec{r} \times \vec{p}$ the cross product of the position and momentum vectors of the particle (angular momentum is defined relative to some center). It can be thought of as the oriented area of the parallelogram generated by r and p, or double the oriented area of the triangle connecting them.

The cross product of two vectors is not so commonly thought of as a bivector because
in three dimensions it can be identified with its dual vector. This amounts to
describing an oriented area element by a normal vector of length equal to the area,
with the direction of the normal as one is taught in physics classes, using the righthand
rule (a choice of orientation on the ambient 3-dimensional space). The identification of bivectors with vectors allows us to add them using vector addition. This correctly describes the addition of angular momenta.

 It turns out however, that thinking of a spin as a quantized area element is key to the applications of spin geometry to quantum geometry.
 
 There is an important fact about spacetime which general relativity teaches us. A bounded region in spacetime can only pass a finite amount of information to its exterion. The amount of information,or more precisely the dimension of the hilbert space, is the exponential of the area of the boundary of the region in plank units.
 
 The first author decided to implement this idea by replacing the spacetime manifold with a discrete simplicial complex. Thus the non-denumerably infinite point set of classical analysis is abandoned as unphysical.
 
 A simplicial complex is a higher dimensional analog of a triangulated surface. There is one type of simplex for each dimension. A zero simplex is a point, a 1-simplex is a line segment, a 2-simplex is a triangle a 3-simplex is a tetrahedron etc.  Simplicial complexes are formed by attaching simplices along their faces.
 
 Now it turns out that the way to combine these two ideas is to construct models for quantum geometry by putting quantum spins on the two dimensional simplices of a simplicial complex. This has proved to be an interesting approach to constructing models for quantum relativity and the related but simpler problem of constructing topological quantum field theories.

In state-sum constructions of topological quantum field theories (TQFTs) and related models of gravity in 4-dimensions, spins are associated to the 2-dimensional faces of a triangulation.  
Most models
for TQFTs and gravity in four dimensions \cite{BC98,CKY97,CY93,ELPR08} involve not only assigning a spin
to each 2-dimensional face, but another spin to each 3-simplex to give so-called (quantum)
15j-symbols (10 spins for the 2-dimensional faces, and 5 for the 3-dimensional
faces of a 4-simplex). Motivated by the suggestion that the extra spin corresponded to
(the area of) a parallelogram with vertices on the mid-points of four edges, in 1999,
the second author gave a formula for the areas of these “medial parallelogram” in terms of the six
edge lengths \cite{Y99}.  For the parallogram with vertices at the midpoints of the edges with lengths $b,c,y$ and $z$ in Figure 1 below, his formula gives the area $L$ as

\[ L = \frac{1}{8}\sqrt{4a^2x^2 - (b^2 - c^2 + y^2 - z^2)^2} \]

His observation, made soon after, that for any tetrahedron, the sum of the squares of the areas of the four faces of the tetrahedon is always four times the sum of the squares of the areas of the three medial parallelograms, was eventually published \cite{Y10} in the context of its generalization to a relation between squared hypervolumes of faces and squared hypervolumes of medial sections of simplexes, in the sense of Talata \cite{T03}, for simplexes of any dimension.

Independently in 2005, in work published only on a private website in 2012, McConnell \cite{M05} considered a notion equivalent to the medial parallelogram, which he termed ``pseudofaces,'' the quadrilaterals obtained by projecting the tetrahedron onto a plane, which (albeit described differently in his paper) may be taken without loss of generality to be the plane containing one of the medial parallelograms.  The medial parallelogram is then the Varignon parallelogram \cite{V31} of McConnell's pseudoface, its area being exactly half that of the pseudoface.  Thus the relation, discovered by McConnell, corresponding to that cited above is simply that the sum of the squares of the face areas is the sum of the squares of the pseudoface areas.

There was then the interesting question of whether the classical limit of the spin geometry was really ordinary point set geometry. This led to the
classical problem of recovering the shape of a tetrahedron from its oriented areas.
Along the way, we will see that the (bi)vector area elements of the medial parallograms
contain the same information as sums of the (bi)vector elements of the faces, and thus
contain information about the dihedral angles. Recovering the shape of the tetrahedron
reduces to finding its six edge lengths.

Aspects of this paper, in particular our new proof of the relation involving sums of
squared areas, can be thought of as “using physics to do math,” an idea popularized by Edward Witten.

It is the purpose of this note to show explicitly how any six of the seven inequivalent
areas thus associated with a tetrahedron determine the dihedral angles, face angles, and
finally the edge lengths, thus completely determining the geometry of the tetrahedron.  

We fix notation following the classic compendium of trigonometric results on tetrahedara by Richardson \cite{R02}, omitting those for which the present work has no use.  We also, for brevity, adopt his abuse of language to refer to parts of the tetrahedron by the notation for their area or length.
 
\begin{center}
\begin{figure}[h]
\setlength{\unitlength}{0.0008in}%
\begingroup\makeatletter\ifx\SetFigFont\undefined
\def\x#1#2#3#4#5#6#7\relax{\def\x{#1#2#3#4#5#6}}%
\expandafter\x\fmtname xxxxxx\relax \def\y{splain}%
\ifx\x\y   
\gdef\SetFigFont#1#2#3{%
  \ifnum #1<17\tiny\else \ifnum #1<20\small\else
  \ifnum #1<24\normalsize\else \ifnum #1<29\large\else
  \ifnum #1<34\Large\else \ifnum #1<41\LARGE\else
     \huge\fi\fi\fi\fi\fi\fi
  \csname #3\endcsname}%
\else
\gdef\SetFigFont#1#2#3{\begingroup
  \count@#1\relax \ifnum 25<\count@\count@25\fi
  \def\x{\endgroup\@setsize\SetFigFont{#2pt}}%
  \expandafter\x
    \csname \romannumeral\the\count@ pt\expandafter\endcsname
    \csname @\romannumeral\the\count@ pt\endcsname
  \csname #3\endcsname}%
\fi
\fi\endgroup
\begin{picture}(4725,4671)(2701,-6088)
\thicklines
\put(3151,-5761){\line( 0, 1){  0}}
\put(3151,-5761){\line( 1, 0){4050}}
\put(7201,-5761){\line(-1, 3){1350}}
\put(5851,-1711){\line(-2,-3){2700}}
\put(5851,-1711){\line( 2,-3){1350}}
\put(7201,-3736){\line(-2,-1){4050}}
\put(7201,-3736){\line( 0,-1){2025}}
\put(4276,-3511){\makebox(0,0)[lb]{\smash{\SetFigFont{12}{14.4}{it}$a$}}}
\put(6301,-3736){\makebox(0,0)[lb]{\smash{\SetFigFont{12}{14.4}{it}$z$}}}
\put(7351,-4861){\makebox(0,0)[lb]{\smash{\SetFigFont{12}{14.4}{it}$x$}}}
\put(5101,-6061){\makebox(0,0)[lb]{\smash{\SetFigFont{12}{14.4}{it}$b$}}}
\put(6676,-2611){\makebox(0,0)[lb]{\smash{\SetFigFont{12}{14.4}{it}$y$}}}
\put(5026,-4636){\makebox(0,0)[lb]{\smash{\SetFigFont{12}{14.4}{it}$c$}}}
\put(5701,-1561){\makebox(0,0)[lb]{\smash{\SetFigFont{12}{14.4}{rm}$A$}}}
\put(7326,-3661){\makebox(0,0)[lb]{\smash{\SetFigFont{12}{14.4}{rm}$C$}}}
\put(7351,-5986){\makebox(0,0)[lb]{\smash{\SetFigFont{12}{14.4}{rm}$B$}}}
\put(2851,-5986){\makebox(0,0)[lb]{\smash{\SetFigFont{12}{14.4}{rm}$O$}}}

\end{picture}
\caption{Figure 1:  The tetrahedron $OABC$ with edge lengths labeled \label{OABC}}
\end{figure}
\end{center}

$OABC$ is the tetrahedron; 

$\Delta_0$, (respectively, $\Delta_1$,$\Delta_2$,$\Delta_3$) the area of the face opposite to $O$, (respectively, $A, B, C$.); 

$a, b, c, x, y, z$ the lengths of $OA, OB, OC, BC, CA,$ and $AB$, respectively;

$A$, (respectively, $B, C, X, Y, Z$ the dihedral angles whose edge is (of length) $a$ (respectively, $b, c, x, y, z$);




$\alpha_1$, $\beta_2$ and $\gamma_3$  the angles $BOC, COA, AOB$, respectively;

$\alpha_0$, $\beta_3$ and $\gamma_2$  the angles $BAC, OAB, CAO$, respectively;
 
$\alpha_3$, $\beta_0$ and $\gamma_1$  the angles $OBA, ABC, CBO$, respectively; and 

$\alpha_2$, $\beta_1$ and $\gamma_0$  the angles $OCA, BCO, ACB$, respectively.

As Richardson \cite{R02} observes of his notational choices,
$\alpha_i$ is not adjacent to either $a$ or $x,$ $\beta_i$ is not adjacent to either $b$ or $y$, and
$\gamma_i$ is not adjacent to either $c$ or $z$, for $i = 0,1,2, 3$,
and that the subscripts are the same as those of the faces $\Delta_i$ in which the angles lie.

Finally we need notations for the areas not considerd by Richardson \cite{R02} as they only arose in the work of the Yetter \cite{Y99, Y10} or McConnell \cite{M05}:  

Following McConnell \cite{M05}, $P, Q,$ and $R$ are the areas of the projections of the tetrahedron onto planes parallel to $a$ and $x$, $b$ and $y$, and $c$ and $z$, respectively, while the areas of the medial parallelograms lying in the plane parallel to those pairs of edges midway between them are $L, M$ and $N$, respectively.

Before turning to the derivation of dihedral angles, face angles and (an) edge length
from the areas we provide a new proof the following:

\begin{theorem}
 For any tetrahedron
\[ 4(L^2 + M^2 + N^2) = \Delta_0^2 + \Delta_1^2 + \Delta_2^2 + \Delta_3^2 \]
\end{theorem}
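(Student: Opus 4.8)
The plan is to treat every triangular face and every medial parallelogram as carrying an \emph{oriented} area, i.e. a bivector, and to observe that the three medial bivectors are simple linear combinations of the four face bivectors. I would place $O$ at the origin and write $\vec a,\vec b,\vec c$ for the position vectors of $A,B,C$, so that the three faces meeting at $O$ have bivector areas
\[ \vec F_3=\tfrac12\,\vec a\wedge\vec b,\qquad \vec F_1=\tfrac12\,\vec b\wedge\vec c,\qquad \vec F_2=\tfrac12\,\vec c\wedge\vec a, \]
indexed as is the opposite vertex so that $\Delta_i=\lvert\vec F_i\rvert$, while the remaining face has $\vec F_0=\tfrac12\bigl((\vec b-\vec a)\wedge(\vec c-\vec a)\bigr)$. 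Expanding this last expression yields the closure relation $\vec F_0=\vec F_1+\vec F_2+\vec F_3$, which is just the vector statement that the oriented areas of a closed surface sum to zero.

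First I would compute the medial bivectors. The parallelogram of area $L$, for the opposite pair $OA$, $BC$, has its four vertices at the midpoints of $OB,OC,AB,AC$; differencing these in pairs shows its edge vectors are $\tfrac12\vec a$ and $\tfrac12(\vec c-\vec b)$, so that
\[ \vec L=\tfrac12\vec a\wedge\tfrac12(\vec c-\vec b)=\tfrac14\bigl(\vec a\wedge\vec c-\vec a\wedge\vec b\bigr)=-\tfrac12(\vec F_2+\vec F_3). \]
Hence $2\vec L=\vec F_2+\vec F_3$ up to sign, i.e. twice the medial bivector is the sum of the two face bivectors sharing the edge $OA$ (equivalently, by closure, $\vec F_0-\vec F_1$, the difference of the two faces sharing $BC$); the analogous computations give $2\vec M=\vec F_1+\vec F_3$ and $2\vec N=\vec F_1+\vec F_2$. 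This identification of each medial bivector with a sum of adjacent face bivectors is the crux of the argument, and the step I expect to demand the most care, chiefly in pinning down the parallelogram's vertices and keeping the orientations consistent, although only the magnitudes will enter the final identity.

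It then remains to take squared norms, using the induced inner product on bivectors, for which $\lvert\vec u\wedge\vec v\rvert^2=\lvert\vec u\rvert^2\lvert\vec v\rvert^2-(\vec u\cdot\vec v)^2$ is the squared area. Expanding the three medial norms gives
\[ 4(L^2+M^2+N^2)=\sum_{i=1,2,3}2\lvert\vec F_i\rvert^2+2\!\!\sum_{1\le i<j\le 3}\!\!\vec F_i\cdot\vec F_j, \]
while squaring the closure relation gives
\[ \Delta_0^2=\lvert\vec F_0\rvert^2=\sum_{i=1,2,3}\lvert\vec F_i\rvert^2+2\!\!\sum_{1\le i<j\le 3}\!\!\vec F_i\cdot\vec F_j, \]
so that $\Delta_0^2+\Delta_1^2+\Delta_2^2+\Delta_3^2$ reduces to exactly the same expression, which proves the identity. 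The only genuinely nontrivial inputs are the closure relation and the medial-bivector identities of the previous step; once those are in hand the conclusion is a one-line expansion.
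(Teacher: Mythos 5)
Your proof is correct, and it shares its skeleton with the paper's: both arguments hinge on the identity that twice the oriented area of a medial parallelogram equals the sum of the oriented areas of the two faces meeting along the edge to which it is parallel (equivalently, by closure, the sum of the other two), after which the theorem is a computation with squared norms. Where you differ is in how that identity is obtained and in the final bookkeeping. You place $O$ at the origin, locate the parallelogram's vertices at explicit midpoints, and compute $2\vec{L}=\pm(\vec{F_2}+\vec{F_3})$ directly as a wedge product; the paper never introduces coordinates, deriving the same relation by applying the ``closed surface has zero total vector area'' principle (phrased as a hydrostatic force balance) to the solid cut off by the medial plane, whose boundary consists of the parallelogram, two quarter-scale triangles and two trapezoids of three-quarters the face areas. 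Your route is more self-contained and mechanically checkable; the paper's is coordinate-free and makes the factor of $2$ appear without any computation. In the final step you eliminate $\vec{F_0}$ via closure and expand three squared norms in the three independent bivectors, whereas the paper sums all six identities of the form $4L^2=|\vec{\Delta_i}+\vec{\Delta_j}|^2$ symmetrically and then invokes closure in the form $-\Delta_i^2=\sum_{j\neq i}\vec{\Delta_i}\bullet\vec{\Delta_j}$; the two expansions are equivalent, and yours is if anything slightly shorter.
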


\begin{proof}
Regard the tetrahedron as a region in a fluid at rest, and for simplicity exerting
a constant pressure of 1. The net force acting on the region is zero, but this can be
expressed as the sum of forces acting by inward normal vectors to the faces, with
magnitude equal to the area of the face. Thus letting the addition of the vector mark to
our notation for the face areas indicate these inward normals we have
\[ \vec{\Delta_0} + \vec{\Delta_1} + \vec{\Delta_2} + \vec{\Delta_3} = \vec{0}, \]

\noindent from which it follows that the sum of any two inward normals is equal to the negative
of the sum of the other two inward normals (for example, $\vec{\Delta_0} + \vec{\Delta_1} = -\vec{\Delta_2} - \vec{\Delta_3}$).

The same argument applies to the regions bounded by a medial parallelogram, two
triangles (each a quarter of the area of the face containing it) and two trapezoids (each
three-quarters the area of the face containing it). If we let the addition of a vector mark
to the name of the area of a medial parallelogram indicate the normal vector to the
parallelogram in the direction of the non-incident edge containing $O$, we then have
\[ \vec{L} + \frac{3}{4}\vec{\Delta_3} +  \frac{3}{4}\vec{\Delta_2} +  \frac{1}{4}\vec{\Delta_1} +  \frac{1}{4}\vec{\Delta_1} = \vec{0} ,\]
\[ \vec{M} + \frac{3}{4}\vec{\Delta_3} +  \frac{3}{4}\vec{\Delta_1} +  \frac{1}{4}\vec{\Delta_2} +  \frac{1}{4}\vec{\Delta_1} = \vec{0} , \mbox{\rm and}\]
\[ \vec{N} + \frac{3}{4}\vec{\Delta_2} +  \frac{3}{4}\vec{\Delta_1} +  \frac{1}{4}\vec{\Delta_3} +  \frac{1}{4}\vec{\Delta_1} = \vec{0} .\]

\noindent Using the previous observation to rewrite the summands with coefficent $\frac{3}{4}$ or those with coefficient $\frac{1}{4}$, we have
\[ 2\vec{L} + \vec{\Delta_3} + \vec{\Delta_2} = \vec{0} = 2\vec{L} - \vec{\Delta_1} - \vec{\Delta_0}, \]
\[ 2\vec{M} + \vec{\Delta_3} + \vec{\Delta_1} = \vec{0} = 2\vec{M} - \vec{\Delta_2} - \vec{\Delta_0}, \mbox{\rm and} \]
\[ 2\vec{N} + \vec{\Delta_2} + \vec{\Delta_1} = \vec{0} = 2\vec{N} - \vec{\Delta_3} - \vec{\Delta_0}. \]

\noindent Armed with these relations, we can now consider dot products. From the tetrahedron we have that
\[ -\Delta_i^2 = -|\vec{\Delta_i}|^2 = \sum_{j\neq i} \vec{\Delta_i}\boldsymbol{\cdot} \vec{\Delta_j} \]

\noindent while from the last set of equations, we obtain
\[ 4L^2 = |\vec{\Delta_3} + \vec{\Delta_2}|^2 = |\vec{\Delta_1} + \vec{\Delta_0}|^2, \]
\[ 4M^2 = |\vec{\Delta_3} + \vec{\Delta_1}|^2 = |\vec{\Delta_2} + \vec{\Delta_0}|^2, \mbox{\rm and}\]
\[ 4N^2 = |\vec{\Delta_2} + \vec{\Delta_1}|^2 = |\vec{\Delta_3} + \vec{\Delta_0}|^2. \]

\noindent Adding all six equations expressing four times a squared parallelogram area as the square of the
magnitude of a sum of face vectors gives

\[ 8(L^2 + M^2 + N^2) = \sum_{i,j,y<j} |\vec{\Delta_i} + \vec{\Delta_j}|^2 = 
\sum_{i,j,y<j} |\vec{\Delta_i}|^2 + 2\vec{\Delta_i}\boldsymbol{\cdot} \vec{\Delta_j} + |\vec{\Delta_j}|^2 \]

Now, in the last expression, the square of each face area occurs three times. But the
middle terms involving a dot product can be rearranged to give the negatives of each
squared face area once by the relations from the tetrahedron, giving us exactly twice
the desired equation.
\end{proof}

In our notation, the basic results of McConnell \cite{M05} which motivated his definition of pseudofaces, together with the relationship between pseudofaces and medial parallelograms give
\[ \Delta_0^2 + \Delta_1^2 - 2\Delta_0\Delta_1 \cos X = P^2 = \Delta_2^2 + \Delta_3^2 - 2\Delta_2\Delta_3 \cos A = 4L^2, \]
\[ \Delta_0^2 + \Delta_2^2 - 2\Delta_0\Delta_2 \cos Y = Q^2 = \Delta_1^2 + \Delta_3^2 - 2\Delta_1\Delta_3 \cos B = 4M^2, \mbox{\rm and}\]
\[ \Delta_0^2 + \Delta_3^2 - 2\Delta_0\Delta_3 \cos Z = R^2 = \Delta_1^2 + \Delta_2^2 - 2\Delta_1\Delta_2 \cos C  = 4N^2.\]

McConnell's proof \cite{M05} of these identities is to observe that the pseudoface is the union
of two triangles sharing a common edge, and that the union of the altitudes of these,
together with the altitudes of two faces of which they are projections, when translated
into some plane perpendicular to the common edge, form a euclidean triangle, from
which the result follows by the ordinary law of cosines and the usual formula for areas
of triangles in terms of altitude and base. However, observe that when the dot product
is expressed in terms of the magnitudes of vectors and the cosine of the angle between
them, and it is recalled that dihedral angles are the angle between normal vectors to
the faces, these are exactly the six equations we just added to give the result of \cite{Y10}
relating the squares of face and parallelogram areas.

We can thus solve these equations to express the cosines of all of the six dihedral angles in terms of the seven areas.  Each depends on the areas of the faces incident with the dihedral angle and the area of the medial parallelogram which is not incident.  We give two examples and leave the rest to the interested reader to recover by symmetry:
\[ \cos A = \frac{P^2 - \Delta_2^2 - \Delta_3^2}{2\Delta_2\Delta_3} = \frac{4L^2 - \Delta_2^2 - \Delta_3^2}{2\Delta_2\Delta_3} ;\]
\[ \cos X = \frac{P^2 - \Delta_0^2 - \Delta_1^2}{2\Delta_0\Delta_1} = \frac{4L^2 - \Delta_0^2 - \Delta_1^2}{2\Delta_0\Delta_1} . \]

From this and the classical result that the dihedral angles on edges meeting at any vertex are the angles of a spherical triangle with the face angles at the vertex as the central angles, we can use the spherical law of cosines to express the angles between the edges in terms of the areas.  We give the face angles for the face opposite $O$, leaving the others to the interested reader.
\[ \cos \alpha_0 = \frac{\cos A + \cos Y \cos Z}{\sin Y \sin Z}. \]
\[ \cos \beta_0 = \frac{\cos B + \cos X \cos Z}{\sin X \sin Z}. \]
\[ \cos \gamma_0 = \frac{\cos C + \cos X \cos Y}{\sin X \sin Y}. \]

One can substitute the previous expressions for the dihedral angles in terms of areas to obtain formulas directly expressing the face angles in terms of the areas.  

\[ \cos \alpha_0 = \sqrt{\frac{[2\Delta_0^2 P^2 - 2\Delta_0^2 \Delta_2^2 - 2\Delta_0^2 \Delta_3^2 + (Q^2 - \Delta_0^2 - \Delta_2^2)(R^2 - \Delta_0^2 - \Delta_3^2)]^2}{[4\Delta_0^2 \Delta_2^2 - (Q^2 - \Delta_0^2 - \Delta_2^2)^2][4\Delta_0^2 \Delta_3^2 - (R^2 - \Delta_0^2 - \Delta_3^2)^2]}} . \]

\[ \cos \beta_0 = \sqrt{\frac{[2\Delta_0^2 Q^2 - 2\Delta_0^2 \Delta_1^2 - 2\Delta_0^2 \Delta_3^2 + (P^2 - \Delta_0^2 - \Delta_1^2)(R^2 - \Delta_0^2 - \Delta_3^2)]^2}{[4\Delta_0^2 \Delta_1^2 - (P^2 - \Delta_0^2 - \Delta_1^2)^2][4\Delta_0^2 \Delta_3^2 - (R^2 - \Delta_0^2 - \Delta_3^2)^2]}} .\]

\[ \cos \gamma_0 = \sqrt{\frac{[2\Delta_0^2 R^2 - 2\Delta_0^2 \Delta_1^2 - 2\Delta_0^2 \Delta_2^2 + (P^2 - \Delta_0^2 - \Delta_1^2)(Q^2 - \Delta_0^2 - \Delta_2^2)]^2}{[4\Delta_0^2 \Delta_1^2 - (P^2 - \Delta_0^2 - \Delta_1^2)^2][4\Delta_0^2 \Delta_2^2 - (Q^2 - \Delta_0^2 - \Delta_2^2)^2]}} . \]

And similarly for the sines:

\[ \sin \alpha_0 = \sqrt{1 - \frac{[2\Delta_0^2 P^2 - 2\Delta_0^2 \Delta_2^2 - 2\Delta_0^2 \Delta_3^2 + (Q^2 - \Delta_0^2 - \Delta_2^2)(R^2 - \Delta_0^2 - \Delta_3^2)]^2}{[4\Delta_0^2 \Delta_2^2 - (Q^2 - \Delta_0^2 - \Delta_2^2)^2][4\Delta_0^2 \Delta_3^2 - (R^2 - \Delta_0^2 - \Delta_3^2)^2]}} ;\]

\[ \sin \beta_0 = \sqrt{1 - \frac{[2\Delta_0^2 Q^2 - 2\Delta_0^2 \Delta_1^2 - 2\Delta_0^2 \Delta_3^2 + (P^2 - \Delta_0^2 - \Delta_1^2)(R^2 - \Delta_0^2 - \Delta_3^2)]^2}{[4\Delta_0^2 \Delta_1^2 - (P^2 - \Delta_0^2 - \Delta_1^2)^2][4\Delta_0^2 \Delta_3^2 - (R^2 - \Delta_0^2 - \Delta_3^2)^2]}} ;\]

\[ \sin \gamma_0 = \sqrt{1 - \frac{[2\Delta_0^2 R^2 - 2\Delta_0^2 \Delta_1^2 - 2\Delta_0^2 \Delta_2^2 + (P^2 - \Delta_0^2 - \Delta_1^2)(Q^2 - \Delta_0^2 - \Delta_2^2)]^2}{[4\Delta_0^2 \Delta_1^2 - (P^2 - \Delta_0^2 - \Delta_1^2)^2][4\Delta_0^2 \Delta_2^2 - (Q^2 - \Delta_0^2 - \Delta_2^2)^2]}} .\]

\medskip

Having thus determined the angles between the edges, we can now use the ordinary law of sines and the area formula for a triangle in side-angle-side form to express the edge lengths in terms of the areas:

Consider the triangle opposite O, with area $\Delta_0$

The area is given by
\[ \Delta_0 = \frac{1}{2}\sin \alpha_0 y z.\]

\medskip

Now by the law of sines, we have $y = x\frac{\sin \beta_0}{\sin \alpha_0}$ and $z = x\frac{\sin \gamma_0}{\sin \alpha_0}$. So we may rewrite the area as
\[ \Delta_0 = \frac{1}{2} \frac{\sin \beta_0 \sin \gamma_0}{\sin \alpha_0} x^2. \]

From which we obtain
\[ x = \sqrt{\frac{ 2\Delta_0 \sin \alpha_0}{\sin \beta_0 \sin \gamma_0}}. \]

Substituting the expressions above for the sines, then expresses the length $x$ directly in terms of the seven areas, any one of which can be determined from the other six.  Unfortunately the resulting expression, even after all readily apparent simplifictions have been performed, is too large to conveniently typeset.  As a practical matter computing the sines and substituting the results into the last expression would be the convenient way to find lengths from areas.

\end{document}